




\documentclass[12pt,letterpaper]{article}
\usepackage{amsmath,amsfonts,amsthm,amssymb,stmaryrd,bm,cite} 
\usepackage{relsize,tabls}

\theoremstyle{plain}

\numberwithin{equation}{section}
\newtheorem{thm}{Theorem}[section]

\newcommand{\complex}{{\mathbb C}}
\newcommand{\real}{{\mathbb R}}
\newcommand{\lscript}{{\mathcal L}}
\newcommand{\pscript}{{\mathcal P}}
\newcommand{\rmre}{\mathrm{Re}}
\newcommand{\rmtr}{\mathrm{tr}}
\newcommand{\xhat}{\widehat{x}}
\newcommand{\yhat}{\widehat{y}}
\newcommand{\atilde}{\widetilde{a}}
\newcommand{\cbar}{\overline{c}}
\newcommand{\punderbar}{\underline{p}}
\newcommand{\qunderbar}{\underline{q}}
\newcommand{\junderbartilde}{\widetilde{\underline{j}}}
\newcommand{\sigmaunderbar}{\underline{\sigma}}
\newcommand{\gammaunderbar}{\underline{\gamma}}
\newcommand{\cappunderbar}{\underline{P}}
\newcommand{\ctimes}{\mathrel{\mathlarger\cdot}}

\newcommand{\ab}[1]{\left|#1\right|}
\newcommand{\brac}[1]{\left\{#1\right\}}
\newcommand{\paren}[1]{\left(#1\right)}
\newcommand{\sqbrac}[1]{\left[#1\right]}
\newcommand{\floorbrac}[1]{\lfloor#1\rfloor}
\newcommand{\elbows}[1]{{\left\langle#1\right\rangle}}

\errorcontextlines=0

\begin{document}

\title{INFLATION AND DIRAC\\IN THE CAUSAL SET APPROACH\\
TO DISCRETE QUANTUM GRAVITY
}
\author{S. Gudder\\ Department of Mathematics\\
University of Denver\\ Denver, Colorado 80208, U.S.A.\\
sgudder@du.edu
}
\date{}
\maketitle

\begin{abstract}
In this approach to discrete quantum gravity the basic structural element is a covariant causal set ($c$-causet). The geometry of a $c$-causet is described by a shell-sequence that determines the discrete gravity of a universe. In this growth model, universes evolve in discrete time by adding new vertices to their generating $c$-causet. We first describe an inflationary period that is common to all universes. After this very brief cycle, the model enters a multiverse period in which the system diverges in various ways forming paths of $c$-causets. At the beginning of the multiverse period, the structure of a four-dimensional discrete manifold emerges and quantum mechanics enters the picture. A natural Hilbert space is defined and a discrete, free Dirac operator is introduced. We determine the eigenvalues and eigenvectors of this operator. Finally, we propose values for coupling constants that determine multiverse probabilities. These probabilities predict the dominance of pulsating universes.
\end{abstract}

\section{Causal Sets}  
This section presents a brief review of causal sets (causets). For more background and details, we refer the reader to \cite{hen09,sor03,sur11}. We call a finite partially ordered set $(x,<)$ a \textit{causet} and interpret $a<b$ in $x$ to mean that $b$ is in the causal future of $a$
\cite{hen09,sor03,sur11}. A \textit{labeling} for a causet $x$ of cardinality $\ab{x}$ is a bijection
\begin{equation*}
\ell\colon x\to\brac{1,2,\ldots ,\ab{x}}
\end{equation*}
such that $a,b\in x$ with $a<b$ implies $\ell (a)<\ell (b)$. A labeling for $x$ can be thought of as a ``birth order'' for the vertices of $x$. Two labeled causets $x,y$ are \textit{isomorphic} if there is a bijection $\phi\colon x\to y$ such that $a<b$ in $x$ if and only if $\phi (a)<\phi (b)$ in $y$ and
$\ell\sqbrac{\phi (a)}=\ell (a)$ for all $a\in x$. A causet is \textit{covariant} if it has a unique labeling (up to isomorphism) and we call a covariant causet a $c$-\textit{causet} \cite{gud13,gud141,gud142,gud15}. Denote the set of $c$-causets with cardinality $n$ by $\pscript '_n$ and the set of all
$c$-causets by $\pscript '$. It is shown in \cite{gud13,gud141} that the cardinality $\ab{\pscript '_n}=2^{n-1}$, $n=1,2,\ldots\,$. Two vertices
$a,b\in x$ are \textit{comparable} if $a<b$ or $b<a$. The \textit{height} $h(a)$ of $a\in x$ is the cardinality, minus one, of a longest path in $x$ that ends with $a$. It is shown in \cite{gud141} that a causet $x$ is covariant if and only if $a,b\in x$ are comparable whenever $h(a)\ne h(b)$.

If $x\in\pscript '$ we call the sets
\begin{equation*}
S_j(x)=\brac{a\in x\colon h(a)=j},\quad j=0,1,2,\ldots
\end{equation*}
\textit{shells} and the sequence of integers $s_j(x)=\ab{S_j(x)}$, $j=0,1,\ldots$, is the \textit{shell sequence} for $x$. A $c$-causet is uniquely determined by its shell sequence and we think of $\brac{s_i(x)}$ as describing the ``shape'' or geometry of $x$. In this model, we view a $c$-causet as a framework or scaffolding for a possible universe at a fixed time. The vertices of $x$ represent small cells that can be empty or occupied by a particle. We shall later consider a growth model in which universes evolve in time by adding new vertices.

Let $x=\brac{a_1,a_2,\ldots ,a_n}\in\pscript '_n$, where the subscript $i$ of $a_i$ is the vertex label, $\ell (a_i)=i$. A \textit{path} is a sequence
\begin{equation*}
\gamma =a_{i_1}a_{i_2}\cdots a_{i_m}
\end{equation*}
in $x$ starting at $a_{i_1}$ and moving along successive shells until $a_{i_m}$ is reached. We define the \textit{length} of $\gamma$ by
\cite{gud13}
\begin{equation*}
\lscript (\gamma )=\sqbrac{\sum _{j=2}^m(i_j-i_{j-1})^2}^{1/2}
\end{equation*}
For $a,b\in x$ with $a<b$, a \textit{geodesic} from $a$ to $b$ is a path from $a$ to $b$ that has the shortest length \cite{gud13}. Clearly, if $a<b$, then there is at least one geodesic from $a$ to $b$. If $a,b\in x$ are comparable and $a<b$ say, then the \textit{distance} $d(a,b)$ is the length of a geodesic from $a$ to $b$ \cite{gud13}. It is shown in \cite{gud13} that if $a<c<b$, then $d(a,b)\le d(a,c)+d(c,b)$. Thus, the triangle inequality holds when applicable so $d(a,b)$ has the most important property of a metric. It is also shown in \cite{gud13} that a subpath of a geodesic is a geodesic.

For conciseness, let us refer to a vertex by its label. If there are $j$ geodesics from vertex $1$ to vertex $n$, we define the \textit{curvature} $K(n)$ at $n$ to be $K(n)=j-1$ \cite{gud13}. One might argue that the curvature should be a local property and should not depend so heavily on vertex $1$ which could be a considerable distance away. However, if there are a lot of geodesics from $1$ to $n$, then by the last sentence of the previous paragraph, there are also many geodesics from other vertices to $n$. Thus, the definition of curvature is not as dependent on the initial vertex $1$ as it first appears. Assuming that particles tend to move along geodesics, we see that $K(n)$ gives a measure of the tendency for vertex $n$ to be occupied. In this way, the geometry of $x\in\pscript '$ determines the gravity at each vertex of $x$.

\section{Inflationary Period} 
In this section and the next we shall refer to a vertex in a $c$-causet by its label. Thus, for $x\in\pscript '_n$ we have $x=\brac{1,2,\ldots ,n}$. According to recent observations and cosmological studies, immediately after the big bang the universe expanded exponentially during an inflationary period. We can describe this period by the $c$-causet $x_{0,0}$ with shell sequence $(1,2,4,8,\ldots ,2^n)$. We see that $x_{0,0}$ has $\sum _{i=0}^n2^i=2^{n+1}$ vertices. We can obtain an approximate value for $n$ by noting that the number of photons in the universe is approximately $10^{89}$. We assume that most of these photons were created during the inflationary period and that they still exist as the cosmic microwave background. We also assume that during the inflationary period, the only particles that existed were photons and possibly ``dark photons'' which result in dark energy. Moreover, including empty cells through which photons can move we estimate that the number of vertices produced during the inflationary period to be about $10^{93}$. Hence, $2^{n+1}\approx 10^{93}$ so that
\begin{equation*}
n=93\,\frac{ln 10}{ln 2}-1\approx 308
\end{equation*}
Postulating that each shell is filled in one Planck instant given by about $10^{-43}$ seconds, we conclude that the inflationary period lasted only about $3\times 10^{-41}$ seconds.

It has also been experimentally verified that the universe, in the large, is quite flat and homogeneous. This suggests that during the inflationary period, the curvature should be essentially zero. We now indicate why this is true for our present model by considering some small cases. As an example of a toy inflationary universe, suppose $x$ has shell sequence $(1,2,4,8,16)$. With shells delineated by semi-colons, we label the vertices as follows:
\begin{equation*}
(1;2,3;4,5,6,7;8,9,10,11,12,13,14,15;16,17,\ldots ,31)
\end{equation*}
Then $(1,2)$ and $(1,3)$ are geodesics, $d(1,2)=1$, $d(1,3)=2$ and $K(2)=K(3)=0$. Similarly, $(1,2,4)$, $(1,3,4)$, $(1,3,5)$, $(1,3,6)$ and $(1,3,7)$ are geodesics, $d(1,4)=\sqrt{5}$, $d(1,5)=\sqrt{8}$, $d(1,6)=\sqrt{13}$, $d(1,7)=\sqrt{20}$ and $K(4)=1$, $k(5)=K(6)=K(7)=0$. Table~1 displays the distances squared and curvatures for the vertices of $x$.
\vskip 2pc

{\hskip - 3pc
\begin{tabular}{c|c|c|c|c|c|c|c|c|c|c|c|c|c|c|c|c|c}
Vertex $i$&$2$&$3$&$4$&$5$&$6$&$7$&$8$&$9$&$10$&$11$&$12$&$13$&$14$&$15$&$16$&$17$&$18$\\
\hline
$d(1,i)^2$&$1$&$4$&$5$&$8$&$13$&$20$&$17$&$22$&$29$&$39$&$45$&$56$&$69$&$84$&$61$&$70$&$81$\\
\hline
$K(i)$&$0$&$0$&$1$&$0$&$0$&$0$&$1$&$0$&$1$&$0$&$0$&$0$&$0$&$0$&$1$&$0$&$1$\\
\end{tabular}}
\vskip 2pc
{\hskip - 3pc
\begin{tabular}{c|c|c|c|c|c|c|c|c|c|c|c|c|c}
Vertex $i$&$19$&$20$&$21$&$22$&$23$&$24$&$25$&$26$&$27$&$28$&$29$&$30$&$31$\\
\hline
$d(1,i)^2$&$92$&$105$&$118$&$133$&$148$&$165$&$184$&$205$&$228$&$253$&$280$&$309$&$340$\\
\hline
$K(i)$&$0$&$1$&$0$&$1$&$0$&$0$&$0$&$0$&$0$&$0$&$0$&$0$&$0$\\
\noalign{\bigskip}
\multicolumn{14}{c}%
{\textbf{Table 1 (Distances Squared and Curvatures)}}\\
\end{tabular}}
\vskip 2pc

\noindent Continuing this example to the next shell with $32$ vertices, we find that $K(i)=0$ except for
$i=32,34,36,38,40,42,44,46$ in which case $K(i)=1$. We conclude that in this case, the curvature is zero except for a sprinkling of ones which are about one-fourth the total.

That this conclusion always holds is reinforced by considering the geodesics for the $c$-causet with shell sequence
$(1,2,4,8,16,32)$. We denote a geodesic with vertices having labels $i_1,i_2,\ldots ,i_n$ by $(i_1,i_2,\ldots ,i_n)$. The geodesics terminating at vertices in shell~1 are: (1,2), (1,3). Those terminating at vertices in shell~2 are: (1,2,4), (1,3,4), (1,3,5), (1,3,6), (1,3,7).

\noindent For shell~3 we have: (1,3,5,8), (1,3,6,8), (1,3,6,9), (1,3,6,10), (1,3,7,10), (1,3,7,11), (1,3,7,12), (1,3,7,13), (1,3,7,14), (1,3,7,15).

\noindent For shell~4 we have: (1,3,7,12,16), (1,3,7,13,16), (1,3,7,12,17), (1,3,7,12,18), (1,3,7,13,18),
(1,3,5,13,19), (1,3,7,13,20), (1,3,7,14,20), (1,3,7,14,21), (1,3,7,14,22), (1,3,7,15,22), (1,3,7,15,23), (1,3,7,15,24),\ldots , (1,3,7,15,31).

\noindent For shell 5 we have: (1--15,23,32), (1--15,24,32), (1--15,24,33), (1--15,24,34), (1--15,25,34), (1--15,25,35), (1--15,25,36), (1--15,26,36), (1--15,26,37), (1--15,26,38), (1--15,27,38), (1--15,27,39), (1--15,27,40), (1--15,28,40), (1--15,28,41), (1--15,28,42),
(1--15,29,42), (1--15,29,43), (1--15,29,44), (1--15,30,44), (1--15,31,48), (1--15,31,49),\newline\ldots , (1--15,31,63).
\vskip 1pc

\noindent This pattern continues indefinitely and we conclude that the curvatures are all zero or one with approximately $3/4$ zeros and $1/4$ ones.

Let $m,m+1,\ldots ,m+n$ be the labels of the vertices in a shell for a $c$-causet. We call $m+n$ the \textit{endpoint} of the shell and $\floorbrac{m+\frac{n}{2}}$ the \textit{midpoint} where $\floorbrac{r}$ denotes the integer part of $r$. Let $x$ be an exponentially growing $c$-causet with shell sequence $(1,2,,,,4,\ldots ,2^n)$. It follows by induction on the shell number that every geodesic $(i_1,i_2,\ldots i_n)$,
$n\ge 3$, of $x$ has $i_1,i_2,\ldots ,i_{n-2}$ as endpoints. Also, $i_{n-1}$ is an endpoint if $i_n$ is at least as large as the midpoint of its shell and moreover, $i_{n-1}$ is at least as large as the midpoint of its shell. This gives a fairly complete description of the geodesics in $x$.

We have seen that exponentially growing $c$-causets have a uniformly low curvature. It is of interest that this is a characteristic property of exponential growth. That is, a $c$-causet that grows at a slower pace does not have this property. To illustrate this, let
$y$ be the $c$-causet with shell sequence $(1,2,3,\ldots ,10)$ so $y$ is growing very slowly. Notice that $y$ has $55$ vertices so $y$ is about the same size as the exponentially growing $c$-causet considered previously. However, the curvatures are considerably different as the following table shows.
\vskip 2pc

{\hskip -2pc
\begin{tabular}{c|c|c|c|c|c|c|c|c|c|c|c|c|c|c|c|c}
Vertex $i$&2&3&4&5&6&7&8&9&10&11&12&13&14&15&16&17\\
\hline
$K(i)$&0&0&1&0&0&0&1&0&0&2&0&1&0&0&2&2\\
\end{tabular}}
\vskip 2pc
{\hskip - 1.5pc
\begin{tabular}{c|c|c|c|c|c|c|c|c|c|c|c|c|c|c}
Vertex $i$&18&19&20&21&22&23&24&25&26&27&28&29&30&31\\
\hline
$K(i)$&0&1&0&0&3&2&1&0&0&0&0&2&0&1\\
\end{tabular}}
\vskip 2pc
{\hskip - 1.5pc
\begin{tabular}{c|c|c|c|c|c|c|c|c|c|c|c|c|c}
Vertex $i$&32&33&34&35&36&37&38&39&40&41&42&43&44\\
\hline
$K(i)$&1&0&1&0&0&3&2&0&1&2&0&0&0\\
\end{tabular}}
\vskip 2pc
{\hskip - 1.5pc
\begin{tabular}{c|c|c|c|c|c|c|c|c|c|c|c}
Vertex $i$&45&46&47&48&49&50&51&52&53&54&55\\
\hline
$K(i)$&0&2&4&3&0&0&1&0&0&0&0\\
\noalign{\bigskip}
\multicolumn{12}{c}%
{\textbf{Table 2 (Curvatures in $c$-causet $y$)}}\\
\end{tabular}}
\vskip 2pc

The inflationary period lasts until the system arrives at the $c$-causet $x_{0,0}$ with shell sequence $(1,2,4,\ldots ,2^n)$ with
$n\approx 308$. For this $n$ the system can no longer sustain the immense energies, pressures and densities. The inflationary period terminates and the system goes through a phase transition. At this point most of the vertices (cells) of a universe have been produced and the exponential growth ceases. The system resembles a huge sea of vertices and the vertex production continues above this sea at a much slower pace. The model then enters what we call the \textit{multiverse period}.  This is the period in which we humans now find ourselves.

\section{Multiverse Period} 
The multiverse period begins with the $c$-causet $x_{0,0}$. Then the system diverges in various ways forming different $c$-causet paths starting at the common $c$-causet $x_{0,0}$. Each path corresponds to a distinct universe history and these evolving universes form a multiverse model. During the inflationary period, there are no physical principles operating except the geometry given by the shell sequence. In particular, the curvature is essentially zero and there is no concept of dimension. As we shall see, at the beginning of the multiverse period, the structure of a 4-dimensional discrete manifold emerges and quantum mechanics enters the picture. In this sense, the much sought after ``theory of everything'' is quantum mechanics itself. Moreover, very early in the multiverse period, high curvatures are created which explains why matter collects in certain places and not in others. However, these curvatures are local and since most of the vertices of any of the universes have already been formed during the inflationary period, the universes remain relatively flat. This is because once a curvature has been established it remains unchanged at later times. In the rest of this section and the next, we shall make these statements precise.

An element $a$ of a causet $x$ is \textit{maximal} if there is no $b\in x$ with $a<b$. Thus, $a$ is maximal in $x$ if there are no elements of $x$ that are in the causal future of $a$. If $x,y\in\pscript '$ we say that $x$ \textit{produces} $y$ if $y$ is obtained by first adjoining a maximal element $a$ to $x$ and then adjoining a second maximal element $b$ to $x\cup\brac{a}$ so that
$y=x\cup\brac{a,b}$ where $a\notin x$, $b\notin x$. If $x$ produces $y$, we write $x\to y$ and say that $y$ is an \textit{offspring} of $x$ It is easy to see that any $x\in\pscript '$ has precisely four distinct offspring. In fact, if $x$ has shell sequence
$(s_o,s_1,\ldots ,s_n)$, then the offspring of $x$ have shell sequences:
\begin{align}         
\label{eq31}
&(s_0,s_1,\ldots ,s_n+2)\notag\\
&(s_0,s_1,\ldots ,s_n+1,1)\notag\\
&(s_0,s_1,\ldots ,s_n,2)\\
&(s_0,s_1,\ldots ,s_n,1,1)\notag
\end{align}
We number the $c$-causets in the multiverse period recursively as follows. The four offspring of $x_{0,0}$ in the order given by \eqref{eq31} are denoted by $x_{1,0},x_{1,1},x_{1,2},x_{1,3}$. Given $c$-causet $x_{n,j}$ the offspring in the order given by \eqref{eq31} are $x_{n+1,4j+k}$, $k=0,1,2,3$. For example, the offspring of $x_{1,0},x_{1,1},x_{1,2},x_{1,3}$ are:
\begin{align*}
\label{eq31}
&x_{2,0},x_{2,1},x_{2,2},x_{2,3}\\
&x_{2,4},x_{2,5},x_{2,6},x_{2,7}\\
&x_{2,8},x_{2,9},x_{2,10},x_{2,11}\\
&x_{2,12},x_{2,13},x_{2,14},x_{2,15}
\end{align*}
respectively.

We use the notation $\pscript _0=\brac{x_{0,0}}, \pscript _1=\brac{x_{1,0},x_{1,1},x_{1,2},x_{1,3}},\ldots ,$
\begin{equation*}
\pscript _n=\brac{x_{n,j}\colon j=0,1,\ldots ,4^n-1}
\end{equation*}
We thus see that $\ab{\pscript _n}=4^n$. We denote the set of all $c$-causets in the multiverse period by
$\pscript =\cup _{n=0}^\infty\pscript _n$. We can view $\pscript$ as a 4-dimensional discrete manifold in which the ``tangent vectors'' at $x_{n,j}$ consist of the four edges
\begin{equation*}
d_{n,j}^k=(x_{n,j},x_{n+1,4j+k}),\quad k=0,1,2,3
\end{equation*}
We think of $d_{n,j}$, $k=0,1,2,3$, as the four directions originating at $x_{n,j}$. For simplicity, we consider $x_{0,0}$ as the $c$-causet with a single vertex that we denote by $0$. Thus, we identify $x_{0,0}$ with the $c$-causet $\brac{0}$. In this way w have
$\ab{x_{0,0}}=1$ and the shell sequence of $x_{0,0}$ is simply $(1)$. In a similar way $\ab{x_{1,j}}=3$, $j=0,1,2,3$ and the shell sequence of $x_{1,j}$, $j=0,1,2,3$ are $(3),(2,1),(1,2),(1,1,1)$, respectively. In general $\ab{x_{n,j}}=2n+1$ and the shell sequences of $x_{n,j}$, $=0,1,\ldots ,4^n-1$ are
\begin{equation*}
(2n+1),(2n,1),(2n-1,2),(2n-1,1,1),\ldots ,(1,1,\ldots 1)
\end{equation*}
respectively.

For the multiverse period, we define paths, distances, geodesics and curvature as before except now we begin with the vertex $0$. To show how large curvatures can be generated, we consider an example of an ideal pulsating universe $x_p$. Although our universe may not be described exactly by $x_p$, we are proposing that it may be similar to $x_p$. The $c$-causet $x_p$ has shell sequence
\begin{equation}         
\label{eq32}
(1,2,4,2,4,6,4,2,4,6,8,6,4,2,\ldots ,6,4,2)
\end{equation}
As before, we denote the vertices by their labels. The vertices are listed as follows where we stop at vertex 72:
\begin{equation*}
(0;1,2;3,4,5,6;7,8;9,10,11,12;\cdots ;65,66,67,68,69,70,71,72)
\end{equation*}
Examples of two geodesics are $(0,2,4,7)$ and $(0,2,5,7)$. It follows that $d(0,7)=\sqrt{17}$ and $K(7)=1$. Also, $(0,2,5,8)$ is a geodesic, $d(0,8)=\sqrt{22}$ and $k(8)=0$. Table~3 displays the distances squared and curvatures for the vertices of $x_p$ up to 72.
\vskip 2pc

{\hskip - 1.5pc
\begin{tabular}{c|c|c|c|c|c|c|c|c|c|c|c|c|c|c|c}
Vertex $i$&1&2&3&4&5&6&7&8&9&10&11&12&13&14&15\\
\hline
$d(0,i)^2$&1&4&5&8&13&20&17&22&21&26&31&38&35&40&47\\
\hline
$K(i)$&0&0&1&0&0&0&1&0&1&$2$&0&0&2&0&1\\
\end{tabular}}
\vskip 2pc
{\hskip - 1.5pc
\begin{tabular}{c|c|c|c|c|c|c|c|c|c|c|c|c|c|c}
Vertex $i$&16&17&18&19&20&21&22&23&24&25&26&27&28&29\\
\hline
$d(0,i)^2$&54&63&74&63&70&79&88&79&86&83&88&95&102&97\\
\hline
$K(i)$&0&0&0&2&0&1&0&3&0&3&3&4&0&3\\
\end{tabular}}
\vskip 2pc
{\hskip - 1.5pc
\begin{tabular}{c|c|c|c|c|c|c|c|c|c|c|c}
Vertex $i$&30&31&32&33&34&35&36&37&38&39&40\\
\hline
$d(0,i)^2$&104&111&118&127&138&127&134&143&152&163&174\\
\hline
$K(i)$&8&5&0&0&0&6&0&1&0&1&0\\
\end{tabular}}
\vskip 2pc
{\hskip - 1.5pc
\begin{tabular}{c|c|c|c|c|c|c|c|c|c|c|c}
Vertex $i$&41&42&43&44&45&46&47&48&49&50&51\\
\hline
$d(0,i)^2$&187&202&177&188&199&210&223&236&213&224&235\\
\hline
$K(i)$&0&0&0&1&1&0&1&0&2&3&2\\
\end{tabular}}
\vskip 2pc
{\hskip - 1.5pc
\begin{tabular}{c|c|c|c|c|c|c|c|c|c|c|c}
Vertex $i$&52&53&54&55&56&57&58&59&60&61&62\\
\hline
$d(0,i)^2$&246&229&238&233&238&245&254&247&254&261&270\\
\hline
$K(i)$&0&2&2&2&2&2&5&2&5&2&8\\
\end{tabular}}
\vskip 2pc
{\hskip - 1.5pc
\begin{tabular}{c|c|c|c|c|c|c|c|c|c|c}
Vertex $i$&63&64&65&66&67&68&69&70&71&72\\
\hline
$d(0,i)^2$&279&290&277&286&295&304&315&326&339&354\\
\hline
$K(i)$&5&11&2&11&14&5&17&11&11&11\\
\noalign{\bigskip}
\multicolumn{10}{c}%
{\textbf{Table 3 (Distances Squared and Curvatures for $x_p$)}}\\
\end{tabular}}
\vskip 2pc

We call the set of vertices in $x_p$ between 2s in the shell sequence \eqref{eq32} a \textit{pulse}. Starting with the first 2 in \eqref{eq32}, let $p_0,p_1,p_2,\ldots$ be the pulses. That is, $p_0=\brac{1,2,\ldots ,7,8}$, $p_1=\brac{7,8,\ldots ,23,24}$,
$p_2=\brac{23,24,\ldots ,53,54},\ldots\,$. We see that $\ab{p_0}=8,\ab{p_1}=18,\ab{p_2}=32,\ab{p_3}=50,\ldots\,$. In general we have that
\begin{equation*}
\ab{p_n}=8+10n+4\sqbrac{1+2+\cdots +(n-1)}=8+10n+2(n-1)n=2(n+2)^2
\end{equation*}
As previously discussed, during the multiverse period, we are assuming that two vertices are created every Planck instant. It follows that the time duration of the $n$th pulse in Planck instances is
\begin{equation*}
t_n=\tfrac{1}{2}\ab{p_n}-1=(n+2)^2+1=(n+1)(n+3)
\end{equation*}
Thus, $t_o=3,t_1=8, t_2=15,t_3=24,\ldots\,$. The time $\tau _n$ in Planck instants at which the $n$th pulse begins is computed as follows:
\begin{equation*}
\tau _o=1, t_1=1+t_0=4,\tau _2=1+t_0+t_1=12,\tau_3=1+t_0+t_1+t_2=27,\ldots
\end{equation*}
In general, we have that
\begin{align*}
\tau _n&=1+\sum _{i=0}^{n-1}t_i=1+\sum _{i=0}^{n-1}(i+1)(i+3)=1+\sum _{i=0}^{n-1}(3+4i+i^2)\\
  &=1+3n+2(n-1)n+\frac{(n-1)n(2n-1)}{6}\\
  &=1+n(2n+1)+\frac{(n-1)n(2n-1)}{6}
\end{align*}

The time of our present is approximately $10^{60}$ Planck instants. If we are in the $n$th pulse now, then $\tau _n\approx 10^{60}$. Since $n$ is large, we have that $\tau _n\approx n^3/3$. Hence, $n^3/3\approx 10^{60}$ so that $n\approx 3^{1/3}\times 10^{20}$. We conclude that the universe $x_p$ is about in its $10^{20}$th pulse. The duration of this pulse is $t_n\approx 10^{40}$ Planck instants which is approximately $10^{-3}$ seconds. If this is near the behavior of our own universe, it would be close to being imperceptible. Finally, the number of new vertices produced in each universe during the multiverse period is about $2\times 10^{60}$ This is much smaller than the number of vertices produced during the inflationary period. It has been estimated that the number of massive particles in our universe is about $10^{80}$. Some of these particles were produced during the multiverse period, but most of them came from photons produced during the inflationary period.

\section{Quantum Mechanics Emerges} 
This section discusses the emergence of quantum mechanics at the beginning of the multiverse period. The set $\pscript$ of odd cardinality $c$-causets together with the production relation $\shortrightarrow$ gives a tree $(\pscript ,\shortrightarrow )$. We have seen in Section~3 that we can denote the $c$-causets in $\pscript$ by $x_{n,j}$, $n=0,1,2,\ldots$, $j=0,1,\ldots ,4^n-1$. Moreover, we have four directions (tangent vectors)
\begin{equation*}
d_{n,j}^k=(x_{n,j},x_{n+1,4j+k}),\quad k=0,1,2,3
\end{equation*}
emenating from each node ($c$-causet) $x_{n,j}$. We say that two tangent vectors are \textit{incident} if they have the form $(x,y)$ and $(y,z)$. We also use the notation
\begin{equation*}
\pscript _n=\brac{x_{n,0},x_{n,1},\ldots ,x_{n,4^n-1}}
\end{equation*}
Each $x\in\pscript$ except $x_{0,0}$ has a unique producer and each $x\in\pscript$ has exactly four offspring. In particular
$x_{n,j}\to x_{n+1,4j+k}$, $k=0,1,2,3$ and we interpret the tree $(\pscript ,\shortrightarrow )$ as a sequential growth process.

An $n$-\textit{path} in $\pscript$ is a sequence $\omega =\omega _0\omega _1\cdots\omega _n$ where $\omega _i\in\pscript _i$ and
$\omega _i\to\omega _{i+1}$. We denote the set of $n$-paths by $\Omega _n$ and interpret an $\omega\in\Omega$ as the history of the universe $\omega _n$ at which $\omega$ terminates. We can also consider an $n$-path as a sequence of tangent vectors
\begin{equation*}
\omega =d_{0,0}^{K_0}d_{1,j_1}^{k_1}\cdots d_{n-1,j_{n-1}}^{k_{n-1}}
\end{equation*}
where each tangent vector is incident to the next. Since each $x_{n,j}\in\pscript _n$ has a unique history, we can identify $\pscript _n$ with $\Omega _n$ and we write $\pscript _n\approx\Omega _n$.

We now describe the evolution of the multiverse in terms of a quantum sequential growth process. In such a process, the probabilities of competing geometries are determined by quantum amplitudes. These amplitudes provide interferences that are characteristic of quantum systems. A \textit{transition amplitude} is a map $\atilde\colon\pscript\times\pscript\to\complex$ satisfying $\atilde (x,y)=0$ if
$x\not\to y$ and$\sum _{y\in\pscript}\atilde (x,y)=1$ for all $x\in\pscript$. We call $\atilde$ a \textit{unitary transition amplitude} (uta) if
$\atilde$ also satisfies $\sum _{y\in\pscript}\ab{\atilde (x,y)}^2=1$ for all $x\in\pscript$. We conclude that a uta satisfies
\begin{equation}         
\label{eq41}
\sum _{k=0}^3\atilde (x_{n,j},x_{n+1,4j+k})=\sum _{k=0}^3\ab{\atilde (x_{n,j},x_{n+1,4j+k})}^2=1
\end{equation}
for all $n=0,1,2,\ldots$, $j=0,1,\ldots ,4^n-1$. Using the obvious notation, we can also write \eqref{eq41} as
\begin{equation*}
\sum _{k=0}^3\atilde (d_{n,j}^k)=\sum _{k=0}^3\ab{\atilde (d_{n,j}^k)}^2=1
\end{equation*}

It is shown in \cite{gud142,gud15} how utas can be constructed. We call
\begin{equation*}
c_{n,j}^k=\atilde (x_{n,j},x_{n+1,4j+k})=\atilde (d_{n,j}^k),\quad k=0,1,2,3
\end{equation*}
the \textit{coupling constants} for $\atilde$. As of now, the coupling constants are undetermined. It is hoped that their values can be found so that general relativity becomes an approximation to the present theory. We have thus obtained a kind of quantum Markov chain in which transition probabilities are replaced by utas. If $\atilde$ is a uta and
$\omega =\omega _0\omega _1\cdots\omega _n\in\Omega _n$, we define the \textit{amplitude} of $\omega$ to be
\begin{equation*}
a(\omega )=\atilde (\omega _0,\omega _1)\atilde (\omega _1,\omega _2)\cdots\atilde (\omega _{n-1},\omega _n)
\end{equation*}
Also, we define the \textit{amplitude} of $x\in\pscript _n$ to be $a(\omega )$ where $\omega$ is the unique path in $\Omega _n$ that terminates at $x$.

Let $H_n$ be the Hilbert space
\begin{equation*}
H_n=L_2(\Omega _n)=L_2(\pscript _n)=\brac{f\colon\pscript _n\to\complex}
\end{equation*}
with the standard inner product
\begin{equation*}
\elbows{f,g}=\sum _{x\in\pscript _n}\overline{f(x)}g(x)
\end{equation*}
Let $\xhat _{n,j}$ be the unit vector in $H_n$ given by the characteristic function $\chi _{x_{n,j}}$. Thus, $\dim H_n=4^n$ and
$\brac{\xhat _{n,j}\colon j=0,1,\ldots ,4^n-1}$ becomes an orthonormal basis for $H_n$. For the remainder of this section, $\atilde$ is a uta with corresponding coupling constants $c_{n,j}^k$. We now describe the quantum dynamics generated by $\atilde$. Define the operators $U_n\colon H_n\to H_{n+1}$ by
\begin{equation*}
U_n\xhat _{n,j}=\sum _{k=0}^3c_{n,j}^k\xhat _{n+1,4j+k}
\end{equation*}
and extend $U_n$ to $H_n$ by linearity. It is shown in \cite{gud142,gud15} that $U_n$ is a partial isometry with $U_n^*U_n=I_n$ where the adjoint $U_n^*\colon H_{n+1}\to H_n$ of $U_n$ is given by
\begin{equation*}
U_n^*\xhat _{n+1,4j+k}=\cbar _{n,j}^{\,k}\xhat _{n,j},\quad k=0,1,2,3
\end{equation*}

As usual, a \textit{state} on $H_n$ is a positive operator $\rho$ on $H_n$ with $\rmtr (\rho )=1$. A \textit{stochastic state} on $H_n$ is a state $\rho$ that satisfies $\elbows{\rho 1_n,1_n}=1$ where $1_n=\chi _{\pscript _n}$; that is, $1_n(x)=1$ for all $x\in\pscript _n$. It is shown in \cite{gud15} that if $\rho$ is a state on $H_n$, then $U_n\rho U_n^*$ is a state on $H_{n+1}$. Moreover, if $\rho$ is a stochastic state on $H_n$, then $U_n\rho U_n^*$ is a stochastic state on $H_{n+1}$. The quantum dynamics is then given by the map
$\rho\mapsto U_n\rho U_n^*$. For further details, we refer the reader to \cite{gud15}.

Although $U_n\colon H_n\to H_{n+1}$ is a partial isometry, it cannot be unitary because $H_n$ and $H_{n+1}$ have different dimensions. However, when an additional condition is imposed, we can construct a related unitary operator. We say that $\atilde$ and $c_{n,j}^k$ are \textit{strong} if the matrices $K_{n,j}$ given by
\begin{equation}         
\label{eq42}
K_{n,j}=\begin{bmatrix}\noalign{\smallskip}c_{n,j}^0&c_{n,j}^1&c_{n,j}^2&c_{n,j}^3\\\noalign{\smallskip}
c_{n,j}^1&c_{n,j}^0&c_{n,j}^3&c_{n,j}^2\\\noalign{\smallskip}c_{n,j}^2&c_{n,j}^3&c_{n,j}^0&c_{n,j}^1\\\noalign{\smallskip}
c_{n,j}^3&c_{n,j}^2&c_{n,j}^1&c_{n,j}^0\\\noalign{\smallskip}\end{bmatrix}
\end{equation}
are unitary. Notice that if $\atilde$ is a uta, then $\sum _{k=0}^3\ab{c_{n,j}^k}^2=1$ so $K_n,j$ is unitary if and only if
\begin{align}         
\label{eq43}
\rmre (c_{n,j}^0\cbar _{n,j}^{\,1}+c_{n,j}^2\cbar _{n,j}^{\,3})&=\rmre (c_{n,j}^0\cbar _{n_j}^{\,2}+c_{n,j}^1\cbar _{n,j}^{\,3})\notag\\
  &=\rmre (c_{n,j}^0\cbar _{n_j}^{\,3}+c_{n,j}^1\cbar _{n,j}^{\,2})=0
\end{align}
Thus, \eqref{eq43} is a necessary and sufficient condition for a uta to be strong. A uta already satisfies a condition similar to \eqref{eq43} but not quite as strong. Since $\sum _{k=0}^3c_{n,j}^k=1$ we have
\begin{equation*}
1=\ab{\sum _{k=0}^3c_{n,j}^k}^2=\sum _{k=0}^3c_{n,j}^k\sum _{k'=0}^3\cbar _{n,j}^{\,k'}
  =\sum _{k=0}^3\ab{c_{n,j}^k}^2+2\rmre\sum _{{k,k'=0}\atop{k<k'}}^3c_{n,j}^k\cbar _{n,j}^{\,k'}
\end{equation*}
It follows that
\begin{equation}         
\label{eq44}
\rmre\sum _{{k,k'=0}\atop{k<k'}}^3c_{n,j}^k\cbar _{n,j}^{\,k'}=0
\end{equation}
Notice that \eqref{eq43} implies \eqref{eq44}. It follows that if $K_{n,j}$ is unitary, then $\atilde$ is essentially a uta. In fact when
\begin{equation*}
\sum _{k=0}^3\ab{c_{n,j}^i}^2=\ab{\sum _{k=0}^3c_{n,j}^k}=1
\end{equation*}
If we multiply the $c_{n,j}^k$ by a fixed phase factor $e^{i\phi _{n,j}}$, then $e^{i\phi _{n,j}}c_{n,j}$ becomes a uta.

A direct verification shows that the eigenvalues of $K_{n,j}$ are
\begin{align}         
\label{eq45}
\lambda _{n+1,j}^0&=c_{n,j}^0+c_{n,j}^1+c_{n,j}^2+c_{n,j}^3=1\notag\\
\lambda _{n+1,j}^1&=c_{n,j}^0-c_{n,j}^1+c_{n,j}^2-c_{n,j}^3\notag\\
\lambda _{n+1,j}^2&=c_{n,j}^0+c_{n,j}^1-c_{n,j}^2-c_{n,j}^3\\
\lambda _{n+1,j}^3&=c_{n,j}^0-c_{n,j}^1-c_{n,j}^2+c_{n,j}^3\notag
\end{align}
with corresponding unit eigenvectors
\begin{equation*}
\tfrac{1}{2}(1,1,1,1),\tfrac{1}{2}(1,-1,1,-1),\tfrac{1}{2}(1,1,-1,-1),\tfrac{1}{2}(1,-1,-1,1),
\end{equation*}
respectively.

The Hilbert space $H_{n+1}$ can be decomposed into the direct sum
\begin{equation}         
\label{eq46}
H_{n+1}=H_{n+1,0}\oplus H_{n+1,1}\oplus\cdots\oplus H_{n+1,4^n-1}
\end{equation}
where $\dim H_{n+1,j}=4$ and an orthonormal basis for $H_{n+1,j}$, $n=0,1,2\ldots$, is
\begin{equation*}
\brac{\xhat _{n+1,4j+k}\colon k=0,1,2,3}
\end{equation*}
We define a stochastic unitary operator $V_{n+1,j}\colon H_{n+1,j}\to H_{n+1,j}$ with matrix representation $K_{n,j}$ given by
\eqref{eq42}. To be explicit, we have
\begin{align*}
V_{n+1,j}\xhat _{n+1,4j}
&=c_{n,j}^0\xhat _{n+1,4j}+c_{n,j}^1\xhat _{n+1,4j+1}+c_{n,j}^2\xhat _{n+1,4j+2}+c_{n,j}^3\xhat _{n+1,4j+3}\\
V_{n+1,j}\xhat _{n+1,4j+1}
&=c_{n,j}^1\xhat _{n+1,4j}+c_{n,j}^0\xhat _{n+1,4j+1}+c_{n,j}^3\xhat _{n+1,4j+2}+c_{n,j}^2\xhat _{n+1,4j+3}\\
V_{n+1,j}\xhat _{n+1,4j+2}
&=c_{n,j}^2\xhat _{n+1,4j}+c_{n,j}^3\xhat _{n+1,4j+1}+c_{n,j}^0\xhat _{n+1,4j+2}+c_{n,j}^1\xhat _{n+1,4j+3}\\
V_{n+1,j}\xhat _{n+1,4j+3}
&=c_{n,j}^3\xhat _{n+1,4j}+c_{n,j}^2\xhat _{n+1,4j+1}+c_{n,j}^1\xhat _{n+1,4j+2}+c_{n,j}^0\xhat _{n+1,4j+3}\\
\end{align*}
We conclude from our previous work that the eigenvalues of $V_{n+1,j}$ are $\lambda_{n+1,j}^k$, $k=0,1,2,3$ given by
\eqref{eq45} with corresponding unit eigenvectors
\begin{align}         
\label{eq47}
\xhat _{n+1,j}^0&=\tfrac{1}{2}(\xhat _{n+1,4j}+\xhat _{n+1,4j+1}+\xhat _{n+1,4j+2}+\xhat _{n+1,4j+3})\notag\\
\xhat _{n+1,j}^1&=\tfrac{1}{2}(\xhat _{n+1,4j}-\xhat _{n+1,4j+1}+\xhat _{n+1,4j+2}-\xhat _{n+1,4j+3})\notag\\
\xhat _{n+1,j}^2&=\tfrac{1}{2}(\xhat _{n+1,4j}+\xhat _{n+1,4j+1}-\xhat _{n+1,4j+2}-\xhat _{n+1,4j+3})\\
\xhat _{n+1,j}^3&=\tfrac{1}{2}(\xhat _{n+1,4j}-\xhat _{n+1,4j+1}-\xhat _{n+1,4j+2}+\xhat _{n+1,4j+3})\notag
\end{align}
Finally, we define the stochastic unitary operator $V_{n+1}$ on $H_{n+1}$ by
\begin{equation*}
V_{n+1}=V_{n+1,0}\oplus V_{n+1,1}\oplus\cdots\oplus V_{n+1,4^n-1}
\end{equation*}
The eigenvalues of $V_{n+1}$ are $1$ (with multiplicity $4^n$), $\lambda _{n,j}^1,\lambda _{n,j}^2,\lambda _{n,j}^3$,
$j=0,1,\ldots ,4^n-1$. The corresponding unit eigenvectors are $\xhat _{n+1,j}^k$, $j=0,1,\ldots ,4^n-1$, $k=0,1,2,3$ given by
\eqref{eq47}. The operator $V_{n+1}$ provides an intrinsic symmetry on $H_{n+1}$ generated by the coupling constants. We call
$V_{n+1}$ the \textit{coupling constant symmetry} operator. In this framework, we can also define decoherence functionals and quantum measures in a standard way \cite{gud13,gud141,sor94}.

\section{Dirac Operators} 
In previous sections we have represented the $c$-causets in $\pscript _n$ by $x_{n,j}$, $n=0,1,\ldots$, $j=0,1,\ldots ,4^n-1$. We then denoted the standard orthonormal bases for the Hilbert spaces $H_n$ by $\xhat _{n,j}$. In this section, we follow Dirac to define an energy-momentum operator on $H_n$. To accomplish this, it is useful to write $j$ in its quartic representation
\begin{equation*}
j=j_{n-1}4^{n-1}+j_{n-2}4^{n-2}+\cdots +j_14+j_0,\quad j_i\in\brac{0,1,2,3}
\end{equation*}
which as usual, we abbreviate
\begin{equation}         
\label{eq51}
j=j_{n-1}j_{n-2}\cdots j_1j_0,\quad j_i\in\brac{0,1,2,3}
\end{equation}
This representation describes the directions that a path turns when moving from $x_{0,0}$ to $x_{n,j}$. For example, \eqref{eq51} represents the path that turns in direction $j_{n-1}$ at $x_{0,0}$, then turns in direction $j_{n-2},\ldots$, and finally turns in direction $j_0$ before arriving at $x_{n,j}$. If $j$ has the form \eqref{eq51} and $k\in\brac{1,2,3}$, define
\begin{equation*}
j^k=j'_{n-1}j'_{n-2}\cdots j'_1j'_0
\end{equation*}
where $j'_i=j_i$ if $j_i=k$ and $j'_i=0$ if $j_i\ne k$. We also define $j^0=j$. Notice that $j^0=j^1+j^2+j^3$.

On $H_n$ define the operators $Q_n^k\xhat _{n,j}=\sqrt{j^k}\xhat _{n,j}$ and extend by linearity. We then have that
\begin{equation*}
(Q_n^0)^2=(Q_n^1)^2+(Q_n^2)^2+(Q_n^3)^2
\end{equation*}
Thinking of the $Q_n^k$ as coordinate operators, we define the energy-momentum operators $P_n^k=V_nQ_n^kV_n^*$, $k=0,1,2,3$ on $H_n$. In $\real ^4$ define the indefinite inner product
\begin{equation*}
\qunderbar\ctimes\punderbar =q_0p_0-q_1p_1-q_2p_2-q_3p_3
\end{equation*}
and let $\sigma _k$, $k=0,1,2,3$, be the Pauli matrices
\begin{equation*}
\sigma _0=\begin{bmatrix}1&0\\0&1\end{bmatrix},\quad\sigma _1=\begin{bmatrix}0&1\\1&0\end{bmatrix},\quad
\sigma _2=\begin{bmatrix}0&-i\\i&0\end{bmatrix},\quad\sigma _3=\begin{bmatrix}1&0\\0&-1\end{bmatrix}
\end{equation*}
Also define $\sigmaunderbar =(\sigma _0,\sigma _1,\sigma _2,\sigma _3)$ and $\cappunderbar _n=(P_n^0,P_n^1,P_n^2,P_n^3)$. We then have
\begin{align*}
\sigmaunderbar\ctimes\cappunderbar _{\,n}&=\sigma _0P_n^0-\sigmaunderbar _2P_n^2-\sigmaunderbar _3P_n^3\\
&=\begin{bmatrix}\noalign{\smallskip}P_n^0-P_n^3&-P_n^1+iP_n^2\\\noalign{\smallskip}-P_n^1-iP_n^2&P_n^0+P_n^3\\
\noalign{\smallskip}\end{bmatrix}
\end{align*}
We consider $\sigmaunderbar\ctimes\cappunderbar _n$ as a self-adjoint operator on $H_n\otimes\complex ^2$ and call
$\sigmaunderbar\ctimes\cappunderbar _n$ the \textit{discrete Weyl operator}. To find the eigenvalues and eigenvectors of
$\sigmaunderbar\ctimes\cappunderbar _n$ notice that the eigenvectors of $P_n^k$ are $\yhat _{n,j}=V_n\xhat _{n,j}$ with corresponding eigenvalues $\sqrt{j^k}$.
 
\begin{thm}       
\label{thm51}
Define $\phi _{n,j}=(\phi _{n,j}^1,\phi _{n,j}^2)$ and $\psi _{n,j}=(\psi _{n,j}^1,\psi _{n,j}^2)$, $j=0,1,\ldots ,4^n-1$, in
$H_n\otimes\complex ^2$ by
\begin{align*}
\phi _{n,j}^1&=\paren{\sqrt{j^0}+\sqrt{j^3}}\yhat _{n,j},\quad \phi _{n,j}^2=\paren{\sqrt{j^1}+i\sqrt{j^2}}\yhat _{n,j}\\
\psi _{n,j}^1&=\paren{\sqrt{j^1}-i\sqrt{j^2}}\yhat _{n,j},\quad \psi _{n,j}^2=-\phi _{n,j}^1
\end{align*}
Then $\phi _{n,j}$ are eigenvectors of $\sigmaunderbar\ctimes\cappunderbar _n$ with eigenvalue $0$ and $\psi _{n,j}$ are eigenvectors of $\sigmaunderbar\ctimes\cappunderbar _n$ with eigenvalues $2\sqrt{j^0}$.
\end{thm}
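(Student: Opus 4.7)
The plan is to exploit the fact, recorded just before the theorem statement, that every $\yhat_{n,j}=V_n\xhat_{n,j}$ is a simultaneous eigenvector of all four operators $P_n^0,P_n^1,P_n^2,P_n^3$ with respective eigenvalues $\sqrt{j^k}$ (this follows from $P_n^k=V_nQ_n^kV_n^*$, $V_n^*V_n=I_n$, and $Q_n^k\xhat_{n,j}=\sqrt{j^k}\xhat_{n,j}$). This reduces the eigenvalue problem for $\sigmaunderbar\ctimes\cappunderbar_n$ on $H_n\otimes\complex^2$ to a collection of $4^n$ independent $2\times 2$ eigenvalue problems, one per index $j$, each acting on the plane spanned by $(\yhat_{n,j},0)$ and $(0,\yhat_{n,j})$.

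Concretely, my first step is to apply the explicit block form of $\sigmaunderbar\ctimes\cappunderbar_n$ from the excerpt to a generic vector $(a\yhat_{n,j},b\yhat_{n,j})$. Since each entry of the block is a linear combination of the $P_n^k$ and $\yhat_{n,j}$ is a joint eigenvector, the outcome factors as $\yhat_{n,j}$ times the action of the scalar matrix
\begin{equation*}
M_j=\begin{bmatrix}\sqrt{j^0}-\sqrt{j^3} & -\sqrt{j^1}+i\sqrt{j^2} \\ -\sqrt{j^1}-i\sqrt{j^2} & \sqrt{j^0}+\sqrt{j^3}\end{bmatrix}
\end{equation*}
on $(a,b)^T$. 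The theorem then becomes the two finite-dimensional claims: $(\sqrt{j^0}+\sqrt{j^3},\,\sqrt{j^1}+i\sqrt{j^2})^T\in\ker M_j$, and $(\sqrt{j^1}-i\sqrt{j^2},\,-(\sqrt{j^0}+\sqrt{j^3}))^T$ is an eigenvector of $M_j$ with eigenvalue $2\sqrt{j^0}$.

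Both verifications are direct, but the key arithmetic ingredient is the identity $j^0=j^1+j^2+j^3$ recorded at the end of the construction of $j^k$. For $\phi_{n,j}$, expanding row~1 of $M_j\phi_{n,j}$ produces $j^0-j^3-j^1-j^2$, which vanishes by that identity; row~2 has an obvious common factor of $\sqrt{j^0}+\sqrt{j^3}$ multiplying a manifest cancellation. For $\psi_{n,j}$, row~1 factors cleanly as $(\sqrt{j^1}-i\sqrt{j^2})\bigl[(\sqrt{j^0}-\sqrt{j^3})+(\sqrt{j^0}+\sqrt{j^3})\bigr]=2\sqrt{j^0}\,\psi_{n,j}^1$.

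The step I expect to be the main obstacle, in the sense of requiring genuine care rather than mere expansion, is row~2 of $M_j\psi_{n,j}$. It yields $-(j^1+j^2)-(\sqrt{j^0}+\sqrt{j^3})^2$, and one has to invoke $j^0=j^1+j^2+j^3$ a second time to collapse this to $-2j^0-2\sqrt{j^0j^3}$ and then pattern-match it as $2\sqrt{j^0}\cdot\bigl(-(\sqrt{j^0}+\sqrt{j^3})\bigr)=2\sqrt{j^0}\,\psi_{n,j}^2$. That factorization back into the eigenvalue form is the only place where a sign or factor slip could creep in, so I would write out the cross terms $2\sqrt{j^0j^3}$ explicitly rather than trusting them. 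Once this identification is made, the eigenvalue equations assemble themselves and the theorem is proved.
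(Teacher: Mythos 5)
Your proposal is correct and follows essentially the same route as the paper: both use the fact that $\yhat_{n,j}$ is a joint eigenvector of $P_n^0,\ldots,P_n^3$ with eigenvalues $\sqrt{j^k}$ to reduce the problem to a componentwise verification, invoking $j^0=j^1+j^2+j^3$ exactly where you indicate. Your reformulation via the scalar matrix $M_j$ is only a cosmetic repackaging of the paper's direct entry-by-entry computation, and all four of your row calculations check out.
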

\begin{proof}
A direct verification gives
\begin{align*}
&(P_n^0-P_n^3)\phi _{n,j}^1+(-P_n^1+iP_n^2)\phi _{n,j}^2\\
  &\quad =\paren{\sqrt{j^0}+\sqrt{j^3}}(P_n^0-P_n^3)\yhat _{n,j}+\paren{\sqrt{j^1}+i\sqrt{j^2}}(-P_n^1+iP_n^2)\yhat _{n,j}\\ 
  &\quad =\sqbrac{\paren{\sqrt{j^0}+\sqrt{j^3}}\paren{\sqrt{j^0}-\sqrt{j^3}}+\paren{\sqrt{j^1}+i\sqrt{j^2}}\paren{-\sqrt{j^1}+i\sqrt{j^2}}}\yhat _{n,j}\\
  &\quad =(j^0-j^1-j^2-j^3)\yhat _{n,j}=0
\intertext{and}
&-(P_n^1+iP_n^2)\phi _{n,j}^1+(P_n^0+P_n^3)\phi _{n,j}^2\\
  &\quad =\paren{\sqrt{j^0}+\sqrt{j^3}}(-P_n^1-iP_n^2)\yhat _{n,j}+\paren{\sqrt{j^1}+i\sqrt{j^2}}(-P_n^0+iP_n^3)\yhat _{n,j}\\ 
&\quad =\sqbrac{\paren{\sqrt{j^0}+\sqrt{j^3}}\paren{-\sqrt{j^1}-i\sqrt{j^2}}+\paren{\sqrt{j^1}+i\sqrt{j^2}}\paren{\sqrt{j^0}+i\sqrt{j^3}}}\yhat _{n,j}\\
  &\quad =0
\end{align*}
As before, we have
\begin{align*}
&(P_n^0-P_n^3)\psi _{n,j}^1+(-P_n^1+iP_n^2)\psi _{n,j}^2\\
  &\quad =\paren{\sqrt{j^1}-i\sqrt{j^2}}\paren{\sqrt{j^0}-\sqrt{j^3}}\yhat _{n,j}
     +\paren{-\sqrt{j^0}-\sqrt{j^3}}\paren{-\sqrt{j^1}+i\sqrt{j^2}}\yhat _{n,j}\\ 
  &\quad =2\paren{\sqrt{j^0j^1}-i\sqrt{j^0j^2}}\yhat _{n,j}=2\sqrt{j^0}\ \psi _{n,j}^1\\
\intertext{and}
&-(P_n^1+iP_n^2)\psi _{n,j}^1+(P_n^0+P_n^3)\psi _{n,j}^2\\
  &\quad =\paren{\sqrt{j^1}-i\sqrt{j^2}}\paren{-\sqrt{j^1}-i\sqrt{j^2}}\yhat _{n,j}
    +\paren{-\sqrt{j^0}-\sqrt{j^3}}\paren{\sqrt{j^0}+\sqrt{j^3}}\yhat _{n,j}\\ 
&\quad =\paren{-j^1-j^2-j^0-j^3-2\sqrt{j^0j^3}}\yhat _{n,j}\\
  &\quad =\paren{-2j^0-2\sqrt{j^0j^3}}\yhat _{n,j}=2\sqrt{j^0}\ \psi _{n,j}^2
\end{align*}
We conclude that $\sigmaunderbar\ctimes\cappunderbar _n\phi _{n,j}=0$ and 
$\sigmaunderbar\ctimes\cappunderbar _n\psi _{n,j}=2\sqrt{j^0}\ \psi _{n,j}$, $j=0,1,\ldots ,4^n-1$.
\end{proof}

The discrete Dirac operator is the 4-dimensional extension of the discrete Weyl operator. We first define the $4\times 4$ gamma matrices by
\begin{equation*}
\gamma _0=\begin{bmatrix}\sigma _0&0\\0&-\sigma _0\end{bmatrix},\quad
\gamma _k=i\begin{bmatrix}0&-\sigma _k\\\sigma _k&0\end{bmatrix},\quad k=1,2,3
\end{equation*}
Notice that $\gamma _j$ are self-adjoint matrices, $j=0,1,2,3$. The \textit{discrete Dirac operator} is
\begin{align*}
\gammaunderbar\ctimes\cappunderbar _{\,n}&=\gamma _0P_n^0-\gamma _1P_n^1-\gamma _2P_n^2-\gamma _3P_n^3\\
&=\begin{bmatrix}\noalign{\smallskip}P_n^0&0&iP_n^3&iP_n^1+P_n^2\\\noalign{\smallskip}
0&P_n^0&iP_n^1-P_n^2&-iP_n^3\\\noalign{\smallskip}-iP_n^3&-iP_n^1-P_n^2&-P_n^0&0\\\noalign{\smallskip}
-iP_n^1+P_n^2&iP_n^3&0&-P_n^0\\\noalign{\smallskip}\end{bmatrix}
\end{align*}
We interpret the self-adjoint operator $\gammaunderbar\ctimes\cappunderbar _n$ as the empty-momentum operator. To present the eigenvalues and eigenvectors of $\gammaunderbar\ctimes\cappunderbar _n$ we define
\begin{align*}
u_1&=\begin{bmatrix}\noalign{\smallskip}\paren{\sqrt{2}-1}\sqrt{j^0}\\\noalign{\smallskip}0\\i\sqrt{j^3}
\\\noalign{\smallskip}i\sqrt{j^1}-\sqrt{j^2}\\\noalign{\smallskip}\end{bmatrix},\quad
u_2=\begin{bmatrix}\noalign{\smallskip}0\\\noalign{\smallskip}\paren{\sqrt{2}-1}\sqrt{j^0}\\\noalign{\smallskip}
i\sqrt{j^1}+\sqrt{j^2}\\\noalign{\smallskip}-i\sqrt{j^3}\\\noalign{\smallskip}\end{bmatrix},\\\noalign{\smallskip}
u_3&=\begin{bmatrix}\noalign{\smallskip}i\sqrt{j^1}+\sqrt{j^2}\\\noalign{\smallskip}-i\sqrt{j^3}\\\noalign{\smallskip}
0\\\noalign{\smallskip}\paren{\sqrt{2}-1}\sqrt{j^0}\\\noalign{\smallskip}\end{bmatrix},\quad
u_4=\begin{bmatrix}\noalign{\smallskip}-i\sqrt{j^3}\\\noalign{\smallskip}i\sqrt{j^1}+\sqrt{j^2}\\\noalign{\smallskip}
\paren{\sqrt{2}-1}\sqrt{j^0}\\\noalign{\smallskip}0\\\noalign{\smallskip}\end{bmatrix}
\end{align*}

\begin{thm}       
\label{thm52}
The eigenvalues of $\gammaunderbar\ctimes\cappunderbar _n$ are $-\sqrt{2j^0}$ and $\sqrt{2j^0}$, $j=0,1,\ldots ,4^n-1$. These eigenvalues have multiplicity two and their corresponding eigenvectors are $u_k\yhat _{n,j}$, $K=1,2,3,4$.
\end{thm}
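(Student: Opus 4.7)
The plan is to reduce the spectral problem for $\gammaunderbar\ctimes\cappunderbar_n$ on $H_n\otimes\complex^4$ to a purely numerical $4\times 4$ matrix problem, following the template of Theorem~\ref{thm51}. The key observation is that each $\yhat_{n,j}=V_n\xhat_{n,j}$ is a simultaneous eigenvector of all four momentum operators with $P_n^k\yhat_{n,j}=\sqrt{j^k}\,\yhat_{n,j}$, since $P_n^k=V_nQ_n^kV_n^\ast$ and $V_n$ is unitary. Consequently, applying $\gammaunderbar\ctimes\cappunderbar_n$ to a vector of the form $u\,\yhat_{n,j}$ with $u\in\complex^4$ is the same as multiplying $u$ by the numerical $4\times 4$ matrix $M_j$ obtained from the displayed block form by the substitution $P_n^k\mapsto\sqrt{j^k}$. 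The theorem thus reduces to proving that $M_j u_k=\pm\sqrt{2j^0}\,u_k$, with two of the four $u_k$ corresponding to each sign.

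Before performing the row-by-row check, I would motivate the value $\pm\sqrt{2j^0}$ by squaring the Dirac operator. A short block computation shows that the paper's gamma matrices satisfy $\gamma_\mu^2=I_4$ and $\gamma_\mu\gamma_\nu+\gamma_\nu\gamma_\mu=0$ for $\mu\neq\nu$, using $\sigma_k^2=\sigma_0$ and the anticommutation of distinct Pauli matrices. Combined with the commutativity of the $P_n^k$ on the $\yhat_{n,j}$-basis, and with the inherited relation $(P_n^0)^2=(P_n^1)^2+(P_n^2)^2+(P_n^3)^2$, this yields
\begin{equation*}
(\gammaunderbar\ctimes\cappunderbar_n)^2=(P_n^0)^2+(P_n^1)^2+(P_n^2)^2+(P_n^3)^2=2(P_n^0)^2.
\end{equation*}
Hence the spectrum of $\gammaunderbar\ctimes\cappunderbar_n$ on each four-dimensional $\yhat_{n,j}$-block is contained in $\{\pm\sqrt{2j^0}\}$, and since the operator is self-adjoint with vanishing trace on that block (the diagonal contributions $P_n^0$ and $-P_n^0$ cancel), each sign must occur with multiplicity exactly two.

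To exhibit the eigenvectors I would substitute each $u_k$ into $M_j$ and compute row by row. The algebraic identities that drive the collapse are $j^0=j^1+j^2+j^3$, the factorization $(i\sqrt{j^1}+\sqrt{j^2})(i\sqrt{j^1}-\sqrt{j^2})=-(j^1+j^2)$, and the rationalization $(\sqrt{2}-2)/(\sqrt{2}-1)=-\sqrt{2}$, which explains the prefactor $(\sqrt{2}-1)$ appearing in each $u_k$. As a sample, the first entry of $M_j u_1$ evaluates to $(\sqrt{2}-1)j^0-j^3-(j^1+j^2)=(\sqrt{2}-2)j^0$, which equals $-\sqrt{2j^0}\cdot(\sqrt{2}-1)\sqrt{j^0}$, placing $u_1$ in the $-\sqrt{2j^0}$ eigenspace; an analogous calculation in the fourth row of $M_j u_3$ gives $(2-\sqrt{2})j^0=\sqrt{2j^0}\cdot(\sqrt{2}-1)\sqrt{j^0}$, placing $u_3$ in the $+\sqrt{2j^0}$ eigenspace. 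The remaining row checks follow the same pattern with permuted indices and cancellations; one finds $u_2$ pairs with $u_1$ and $u_4$ with $u_3$.

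The conceptual content is fully captured by the Clifford-algebra squaring identity; the main obstacle is purely bookkeeping, namely tracking the signs of $i$ and the positions of the zero entries carefully enough to sort the four $u_k$ into the two sign-eigenspaces. Linear independence within each sign-eigenspace is immediate from inspection of the zero-entry patterns of the $u_k$ (each has a distinct zero coordinate), and orthogonality across different $j$-values follows from orthogonality of the $\yhat_{n,j}$, so the $4^{n+1}$ vectors $u_k\,\yhat_{n,j}$ account for the entire spectrum of $\gammaunderbar\ctimes\cappunderbar_n$ on $H_n\otimes\complex^4$.
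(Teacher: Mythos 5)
Your proposal is correct, and it reaches the paper's conclusion by a partly different and arguably more illuminating route. The paper's proof is a pure brute-force verification: it applies the $4\times 4$ block matrix to each $u_k\yhat_{n,j}$ and checks all sixteen entries. You do the same reduction to a numerical matrix $M_j$ (made possible, as you note, by the fact that the $\yhat_{n,j}$ are simultaneous eigenvectors of all four $P_n^k$ with eigenvalues $\sqrt{j^k}$), but you first establish the eigenvalue and multiplicity claims independently of the $u_k$ via the Clifford-algebra identity: the paper's gamma matrices satisfy $\gamma_\mu^2=I_4$ and anticommute pairwise, the $P_n^k$ commute, and $(P_n^0)^2=\sum_{k=1}^3(P_n^k)^2$, so
\begin{equation*}
(\gammaunderbar\ctimes\cappunderbar_n)^2=\sum_{\mu=0}^3(P_n^\mu)^2=2(P_n^0)^2 ,
\end{equation*}
which confines the spectrum on the $j$-block to $\pm\sqrt{2j^0}$, and the vanishing trace of $M_j$ together with self-adjointness forces multiplicity two for each sign (when $j^0>0$). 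This buys you the first two sentences of the theorem without any row computations, and it explains \emph{why} the answer is $\pm\sqrt{2j^0}$ rather than merely confirming it; the row-by-row checks (of which you carry out representative samples, correctly, using $j^0=j^1+j^2+j^3$ and $(i\sqrt{j^1}+\sqrt{j^2})(i\sqrt{j^1}-\sqrt{j^2})=-(j^1+j^2)$) are then needed only to identify the specific eigenvectors $u_k$ and sort them into the two sign-eigenspaces. To make the eigenvector claim fully rigorous you would still have to complete all four rows for each $u_k$ exactly as the paper does, and your linear-independence remark should be phrased as: within each sign pair the entries $(\sqrt{2}-1)\sqrt{j^0}\neq 0$ sit in complementary coordinates (both arguments, like the paper's, silently assume $j\neq 0$; for $j=0$ the block is the zero matrix and the $u_k$ vanish).
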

\begin{proof}
By direct verification we have
\begin{align*}
\gammaunderbar\ctimes\cappunderbar _nu_1\yhat _{n,j}&=
\begin{bmatrix}\noalign{\smallskip}\paren{\sqrt{2}-1}j^0-j^3+\paren{i\sqrt{j^1}+\sqrt{j^2}}\paren{\sqrt{j^1}-\sqrt{j^2}}\\\noalign{\smallskip}
i\paren{i\sqrt{j^1}-\sqrt{j^2}}\sqrt{j^3}-i\sqrt{j^3}\paren{i\sqrt{j^1}-\sqrt{j^2}}\\\noalign{\smallskip}
\paren{\sqrt{2}-1}\sqrt{j^0}\paren{-i\sqrt{j^3}}-i\sqrt{j^3j^0}\\\noalign{\smallskip}
\paren{\sqrt{2}-1}\sqrt{j^0}\paren{-i\sqrt{j^1}+\sqrt{j^2}}-\sqrt{j^0}\paren{i\sqrt{j^1}-\sqrt{j^2}}\\
\noalign{\smallskip}\end{bmatrix}\,\yhat _{n,j}\\\noalign{\smallskip}
&=\begin{bmatrix}\noalign{\smallskip}\paren{\sqrt{2}-2}j^0\\\noalign{\smallskip}0\\\noalign{\smallskip}
-i\sqrt{2j^0j^3}\\\noalign{\smallskip}
\sqrt{2j^0}\paren{-i\sqrt{j^1}+\sqrt{j^2}}\\\noalign{\smallskip}\end{bmatrix}\,\yhat _{n,j}=-\sqrt{2j^0}\ u_1\yhat _{n,j}
\end{align*}
\begin{align*}
\gammaunderbar&\ctimes\cappunderbar _nu_2\yhat _{n,j}=
\begin{bmatrix}\noalign{\smallskip}i\sqrt{j^3}\paren{i\sqrt{j^1}+\sqrt{j^2}}-i\sqrt{j^3}\paren{i\sqrt{j^1}+\sqrt{j^2}}\\\noalign{\smallskip}
j^0\paren{\sqrt{2}-1}-j^3+\paren{i\sqrt{j^1}+\sqrt{j^2}}\paren{i\sqrt{j^1}-\sqrt{j^2}}\\\noalign{\smallskip}
\paren{\sqrt{2}-1}\sqrt{j^0}\paren{-i\sqrt{j^1}-\sqrt{j^2}}-\sqrt{j^0}\paren{i\sqrt{j^1}+\sqrt{j^2}}\\\noalign{\smallskip}
i\sqrt{j^3}\paren{\sqrt{2}-1}\sqrt{j^0}+i\sqrt{j^3j^0}\\\noalign{\smallskip}\end{bmatrix}\,\yhat _{n,j}\\\noalign{\smallskip}
&=\begin{bmatrix}\noalign{\smallskip}0\\\noalign{\smallskip}j^0\paren{\sqrt{2}-1}-j^3-j^1-j^2\\\noalign{\smallskip}
-\sqrt{2j^0}\paren{i\sqrt{j^1}+\sqrt{j^2}}\\\noalign{\smallskip}i\sqrt{2j^3j^0}\\\noalign{\smallskip}\end{bmatrix}\,\yhat _{n,j}\ 
=\begin{bmatrix}\noalign{\smallskip}0\\\noalign{\smallskip}\paren{\sqrt{2}-2}-j^0\\\noalign{\smallskip}
-\sqrt{2j^0}\paren{i\sqrt{j^1}+\sqrt{j^2}}\\\noalign{\smallskip}-\sqrt{2j^0}\paren{-i\sqrt{j^3}}\\\noalign{\smallskip}\end{bmatrix}\,\yhat _{n,j}
\\\noalign{\smallskip}&=-\sqrt{2j^0}\ u_2\yhat _{n,j}
\end{align*}
\begin{align*}
\gammaunderbar\ctimes\cappunderbar _nu_3\yhat _{n,j}&=
\begin{bmatrix}\noalign{\smallskip}\sqrt{j^0}\paren{i\sqrt{j^1}+\sqrt{j^2}}+\paren{\sqrt{2}-1}\sqrt{j^0}\paren{i\sqrt{j^1}+\sqrt{j^2}}\\\noalign{\smallskip}-i\sqrt{j^3j^0}-\paren{\sqrt{2}-1}\sqrt{j^3j^0}\\\noalign{\smallskip}
-i\sqrt{j^3}\paren{i\sqrt{j^1}+\sqrt{j^2}}-i\sqrt{j^3}\paren{-i\sqrt{j^1}+\sqrt{j^2}}\\\noalign{\smallskip}
\paren{i\sqrt{j^1}+\sqrt{j^2}}\paren{-i\sqrt{j^1}+\sqrt{j^2}}+j^3-\paren{\sqrt{2}-1}j^0\\\noalign{\smallskip}\end{bmatrix}\,\yhat _{n,j}\\\noalign{\smallskip}
&=\begin{bmatrix}\noalign{\smallskip}\sqrt{2j^0}\paren{i\sqrt{j^1}+\sqrt{j^2}}\\\noalign{\smallskip}\sqrt{2j^0}\paren{-i\sqrt{j^3}}\\\noalign{\smallskip}0\\\noalign{\smallskip}j^1+j^2+j^3-\paren{\sqrt{2}-1}j^0\\\noalign{\smallskip}\end{bmatrix}\,\yhat _{n,j}\ 
=\sqrt{2j^0}\,u_3\yhat _{n,j}\end{align*}
\begin{align*}
\gammaunderbar\ctimes\cappunderbar _nu_4\yhat _{n,j}&=
\begin{bmatrix}\noalign{\smallskip}i\sqrt{j^3j^0}+i\sqrt{j^3}\paren{\sqrt{2}-1}\sqrt{j^0}\\\noalign{\smallskip}
\sqrt{j^0}\paren{i\sqrt{j^1}-\sqrt{j^2}}+\paren{i\sqrt{j^1}-\sqrt{j^2}}\paren{\sqrt{2}-1}\sqrt{j^0}\\\noalign{\smallskip}
j^3-\paren{i\sqrt{j^1}-\sqrt{j^2}}\paren{i\sqrt{j^1}+\sqrt{j^2}}-\paren{\sqrt{2}-1}j^0\\\noalign{\smallskip}
i\sqrt{j^3}\paren{-i\sqrt{j^1}+\sqrt{j^2}}+i\sqrt{j^3}\paren{i\sqrt{j^1}-\sqrt{j^2}}\\\noalign{\smallskip}\end{bmatrix}\,\yhat _{n,j}\\\noalign{\smallskip}
&=\begin{bmatrix}\noalign{\smallskip}\sqrt{2j^0}\paren{i\sqrt{j^3}}\\\noalign{\smallskip}
\sqrt{2j^0}\paren{i\sqrt{j^1}+\sqrt{j^2}}\\\noalign{\smallskip}
j^3+j^1+j^2-\paren{\sqrt{2}-1}j^0\\\noalign{\smallskip}0\\\noalign{\smallskip}\end{bmatrix}\,\yhat _{n,j}=\sqrt{2j^0}\,u_4\yhat _{n,j}\qedhere
\end{align*}
\end{proof}

Another way to express the results of Theorem~\ref{thm52} is the following. We can write $\gamma _0=\sigma _0\otimes\sigma _3$ and
$\gamma _k=\sigma _k\otimes\sigma _2$, $k=1.2.3$. We also have that
\begin{equation*}
\cappunderbar _n\ctimes\gammaunderbar =P_n^0\otimes\gamma _0-P_n^1\otimes\gamma _1
  -P_n^2\otimes\gamma _2-P_n^3\otimes\gamma _3
\end{equation*}
On $H_n\otimes\complex ^4$. For $f\in H_n$, $v\in\complex ^4$ with $v=(v_0,v_1,v_2,v_3)$ we have that
\begin{align*}
\cappunderbar _n\ctimes\gammaunderbar f(x_{n,j})\otimes v&=P_n^0f(x_{n,j})\gamma _0v-\sum _{k=1}^3P_n^kf(x_{n,j})\gamma _kv\\
&=f(x_{n,j})\paren{\sqrt{j^0}\ \gamma _0-\sqrt{j^1}\ \gamma _1-\sqrt{j^2}\ \gamma _2-\sqrt{j^3}\ \gamma _3}v\\\noalign{\smallskip}
&\hskip -6pc =f(x_{n,j})\begin{bmatrix}\noalign{\smallskip}\sqrt{j^0}&0&i\sqrt{j^3}&i\sqrt{j^1}+\sqrt{j^2}\\\noalign{\smallskip}
0&\sqrt{j^0}&i\sqrt{j^1}+\sqrt{j^2}&-i\sqrt{j^3}\\\noalign{\smallskip}
-i\sqrt{j^3}&-i\sqrt{j^1}+\sqrt{j^2}&-\sqrt{j^0}&0\\\noalign{\smallskip}
-i\sqrt{j^1}+\sqrt{j^2}&i\sqrt{j^3}&0&-\sqrt{j^0}\\\noalign{\smallskip}\end{bmatrix}\,\yhat _{n,j}\\\noalign{\smallskip}
\end{align*}
To find the eigenvalues and eigenvectors of $\cappunderbar _n\ctimes\gammaunderbar$, let $f=\yhat _{n,j}$. Then
\begin{equation*}
\cappunderbar\ctimes\gammaunderbar\,\yhat _{n,j}\otimes v=\yhat _{n,j}\otimes\junderbartilde\ctimes\gammaunderbar\,v
\end{equation*}
where $\junderbartilde =\paren{\sqrt{j^0},\sqrt{j^1},\sqrt{j^2},\sqrt{j^3}}$. Hence, we require solutions to
$\junderbartilde\ctimes\gammaunderbar v=\lambda v$. We conclude that the eigenvalues are $-\sqrt{2j^0}$ and $\sqrt{2j^0}$ with corresponding eigenvectors $\yhat _{n,j}\otimes u_k$, $k=1,2,3,4$.

\section{Multiverse Probabilities} 
This section proposes specific values for the coupling constants. These values result in multiverse probabilities that predict a preponderance of pulsating universes. Although these universes are not exactly like the ideal pulsating universes considered in Section~3, they have similar properties. Whether these coupling constants provide an approximation to general relativity will be left for later studies.

In the most elementary case, the coupling constants $c_{n,j}^k$ would be independent of $n,j$ so there would only be four constants $c^k$, $k=0,1,2,3$. Probably the simplest nontrivial values for these constants are \cite{gud142,gud15}:
\begin{equation*}
c^0=\cos ^2\theta e^{2i\theta},\quad c^1=c^2=-i\cos\theta\sin\theta e^{2i\theta},\quad c^3=-\sin ^2\theta e^{2i\theta}
\end{equation*}
We can replace $e^{2i\theta}$ by $1$ because this is an overall phase factor that does not affect probabilities. We can then write
\begin{equation}         
\label{eq61}
c_{n,j}^0=\cos ^2\theta,\quad c_{n,j}^1=c_{n,j}^2=-i\cos\theta\sin\theta,\quad c_{n,j}^3=-\sin ^2\theta
\end{equation}
where $\theta\in (-\pi ,\pi )$.

For $j=0,1,\ldots ,4^n-1$ we write $j$ in its quartic representation \eqref{eq51}. Then $x_{n,j}$ is reached by the path
$(j_{n-1},j_{n-2},\ldots ,j_1,j_0)$; that is, the path
\begin{equation*}
(x_{0,0},x_{1,j_{n-1}},x_{2,j_{n-1}4+j_{n-2}},\ldots ,x_{n,j_{n-1}4^{n-1}+j_{n-2}4^{n-2}+\cdots +j_14+j_0})
\end{equation*}
For example $50=3\ctimes 4^2+0\ctimes 4+2$ so we have the quartic representation $50=302$. Hence, $x_{3,50}$ is reached by that path
$(3,0,2)$ which we can write as
\begin{equation*}
(x_{0,0},x_{1,3},x_{2,12},x_{3,50})
\end{equation*}

The amplitude of $x_{n,j}$ becomes
\begin{equation*}
a(x_{n,j})
  =c_{0,0}^{j_{n-1}}c_{1,j_{n-1}}^{j_{n-2}}c_{2,j_{n-1}4+j_{n-2}}^{j_{n-3}}\cdots c_{n-1,j_{n-1}4^{n-2}+j_{n-2}4^{n-3}+\cdots +j_1}^{j_0}
\end{equation*}
For example, $a(x_{3,50})=c_{0,0}^3c_{1,3}^0c_{2,2}^2$. Now let $j(k)$ be the number of $k$s in the quartic representation of $j$ where $k=\brac{0,1,2,3}$. If $c_{n,j}^k$ have the form \eqref{eq61} we have that
\begin{equation*}
a(x_{n,j})=(c_{n,j}^0)^{j(0)}(c_{n,j}^1)^{j(1)}(c_{n,j}^2)^{j(2)}(c_{n,j}^3)^{j(3)}
\end{equation*}
For example, in this case we conclude that
\begin{equation*}
a(x_{3,50})=c_{n,j}^0c_{n,j}^2c_{n,j}^3=i\cos ^3\theta\sin ^3\theta = \tfrac{i}{8}(\sin 2\theta )^3
\end{equation*}

In a very simple model, let us assume that $c_{n,j}^k$ have the form \eqref{eq61} and that $\theta$ is moderately small, say $\theta =1/10$ radians. Then
\begin{align*}
c_{n,j}^0&=\cos ^2(1/10)\approx\paren{1-\tfrac{1}{100}}^2=(99/100)^2\\
c_{n,j}^1&=c_{n,j}^2=-i\cos (1/10)\sin (1/10)\approx -i\paren{1-\tfrac{1}{100}}\ctimes\tfrac{1}{10}\approx -i/10\\
c_{n,j}^3&=-\sin ^2(1/10)\approx -1/100
\end{align*}
In the usual quantum formalism, the probability that the system is at $x_{n,j}$ is
\begin{equation*}
P(x_{n,j})=\ab{a(x_{n,j})}=\ab{c_{n,j}^0}^{2j(0)}\ab{c_{n,j}^1}^{2j(1)}\ab{c_{n,j}^2}^{2j(2)}\ab{c_{n,j}^3}^{2j(3)}
\end{equation*}
Suppose $n$ is large. Then among the $c$-causets $x_{n,j}$, we have that $x_{n,0}$ has the largest probability and $x_{n,4^n-1}$ has essentially zero probability. However, there are a large number of other $c$-causets (paths). In particular, there are many $c$-causets for which $j(0)$ is large, $j(3)=0$ and $j(1)$ and/or $j(2)$ are small but nonzero. Taken together, these $c$-causets have a dominate probability. Such $c$-causets have approximately the geometry of a pulsating universe as considered in Section~3. For example, suppose $n=20$ and $j(0)=17$, $j(1)=1$, $j(2)=2$, $j(3)=0$ with
\begin{equation*}
j=20000200000001000000
\end{equation*}
so that $j=2\ctimes 4^{19}+2\ctimes 4^{14}+4^6$. Then $x_{n,j}$ has the shell sequence $(1,10,17,13)$.

To obtain a more realistic model, we propose that $\theta _{n,j}$ is a function of $n$ that provides a more periodic behavior. This would result in a closer approximation to a pulsating universe as considered in Section~3. In particular, we suggest that $c_{n,j}^k$ have the form \eqref{eq61} with
\begin{equation*}
\theta _n=\frac{1}{\ln (n+1)}\,\cos\sqbrac{\ln (n+1)}
\end{equation*}
We would then have
\begin{align*}
c_{n,j}^0&=\cos ^2(\theta _n)=\cos ^2\brac{\frac{1}{\ln (n+1)}\,\cos\sqbrac{\ln(n+1)}}\\
c_{n,j}^1&=c_{n,j}^2=-\tfrac{i}{2}\,\sin\brac{\frac{2}{\ln (n+1)}\,\cos\sqbrac{\ln (n+1)}}\\
c_{n,j}^3&=c_{n,j}^0-1
\end{align*}
Upon graphing these functions one will see a kind of periodic behavior with increasing periods.

\end{document}